\documentclass[reprint,aps,prl,twocolumn,showpacs,superscriptaddress]{revtex4-1}
\usepackage{graphicx}
\usepackage{amsmath}
\usepackage{amsthm}
\usepackage{amsfonts}
\usepackage{amssymb}
\usepackage{xcolor}
\usepackage{hyperref}
\usepackage[normalem]{ulem}

\hypersetup{
	colorlinks,
	linkcolor={blue},
	citecolor={blue},
	urlcolor={blue}
}

\usepackage{cleveref}

\newtheorem{theorem}{Theorem}

\def\cT{\ensuremath{\mathcal{T}}}

\newcommand{\set}[1]{\left\{ #1 \right\}}

\newcommand{\Tr}[1]{\rm Tr\left[#1\right]}

\begin{document}

\graphicspath{{./figures/}}

\title{Contextuality and the Single-Qubit Stabilizer Subtheory}
\date{\today}
\author{Piers Lillystone}
\affiliation{Institute for Quantum Computing and Department of Physics and Astronomy, University of Waterloo, Waterloo, Ontario N2L 3G1, Canada}
\author{Joel J. Wallman}
\affiliation{Institute for Quantum Computing and Department of Applied Mathematics, University of Waterloo, Waterloo, Ontario N2L 3G1, Canada}
\author{Joseph Emerson}
\affiliation{Institute for Quantum Computing and Department of Applied Mathematics, University of Waterloo, Waterloo, Ontario N2L 3G1, Canada}
\affiliation{Canadian Institute for Advanced Research, Toronto, Ontario M5G 1Z8, Canada}

\begin{abstract}

Contextuality is a fundamental non-classical property of quantum theory, which has recently been proven to be a key resource for achieving quantum speed-ups in
some leading models of quantum computation. However, which of the forms of
contextuality, and how much thereof, are required to obtain a speed-up in an
arbitrary model of quantum computation remains unclear. In this paper, we show
that the relation between contextuality and a compuational advantage is more
complicated than previously thought. We achieve this by proving that generalized contextuality is
present even within the simplest subset of quantum operations, the so-called
single-qubit stabilizer theory, which offers no computational advantage and was
previously believed to be completely non-contextual. However, the contextuality
of the single-qubit stabilizer theory can be confined to transformations.
Therefore our result also demonstrates that the commonly considered
prepare-and-measure scenarios (which ignore transformations) do not fully
capture the contextuality of quantum theory.

\end{abstract} 

\maketitle

Contextuality~\cite{Bell1967, Kochen1967, Mermin1990, Peres1991, Mermin1993,
Spekkens2005,Simmons2017}, which includes the better-known concept of Bell non-locality as a
special case, is often regarded as the fundamental non-classical property of
quantum theory. Furthermore, contextuality has emerged as an intriguing
explanation for the power of quantum computation: as contextuality is
required~\cite{Howard2014, Delfosse2015, Bermejo-vega2017, Raussendorf2017} to
achieve an exponential quantum speed-up by injecting magic states into Clifford
circuits~\cite{Bravyi2005}; and also quantifies the computational advantage that
can be obtained~\cite{Veitch2012, Veitch2014, Bermejo-vega2017, Okay2017,
Catani2017} in both the magic-state and measurement-based models of quantum
computation~\cite{Raussendorf2001}. 

These considerations motivate us to understand the scope of phenomena that exhibit contextuality,
with the aim of identifying which features of contextual phenomena enable quantum computational speed-up.
However, one of the primary
obstacles to understanding how contextuality powers a quantum computer is that
the multi-qubit stabilizer subtheory~\footnote{A subtheory of quantum theory is
a closed subset of the operations available within quantum theory for a given
Hilbert space dimension.} exhibits contextuality and
yet can be efficiently simulated on a classical
computer~\cite{Gottesman1998,Aaronson2004}. 
 
There are two leading definitions of contextuality: traditional contextuality ~\cite{Bell1967,
Kochen1967, Mermin1990, Peres1991, Mermin1993} and generalized contextuality ~\cite{Spekkens2005,Simmons2017}.
In this paper, we show that, generalized contextuality~\cite{Spekkens2005} is
present even in the single-qubit stabilizer subtheory of quantum theory, a fact missed by previous work ~\cite{Wallman2012c,
Blasiak2013, Kocia2017}. We further
demonstrate that the contextuality present in the single-qubit stabilizer subtheory can
be confined to only appear in the transformations. This contradicts the common --- and often
implicit --- assumption that an operational theory can be classified as contextual
or non-contextual by only considering the preparations and
measurements~\cite{Harrigan2007, Spekkens2008, Kunjwal2015, Bermejo-vega2017, Krishna2017, Schmid2017}.

\textit{Operational theories and ontological models thereof.---}An operational
theory is non-contextual under a given definition if there exists an
ontological model of the operational theory satisfying a specific property that we describe below.

An operational theory is a mathematical framework for predicting the outcomes of
an experimental procedure, that is, a sequence of preparations, transformations, and
measurements. These elements fully determine the experimental statistics, that
is, the conditional probabilities $\text{Pr}(k|P,T,M)$ of observing the outcome
$k$ when the preparation $P$, transformation $T$, and measurement $M$ are
performed sequentially. In quantum mechanics, the conditional probabilities for
an experiment consisting of preparing a density matrix $\rho$, applying a
completely positive and trace-preserving (CPTP) map $\Phi$, and measuring a
positive-operator-valued measure (POVM) $\{E_k\}$ are
$\text{Pr}(k|\rho,\Phi,E_k) = \Tr{E_k \Phi(\rho)}$ by the Born rule.

We can describe the underlying physical processes that generate the experimental
statistics using the ontological models formalism. Here we follow the treatment
in Ref.~\cite{Leifer2014}. An ontological model is defined by a measurable space
$\Lambda$ of possible physical states, with an associated $\sigma$-algebra
$\Sigma$, and sets of measures or measurable functions over $\Lambda$ are used
to represent preparations, transformations and measurements in the ontological
model. For simplicity, we assume that there exists a measure that dominates all other measures in the model \footnote{That is $\exists \sigma$ such that $\sigma(\Delta) = 0 \Rightarrow \mu(\Delta) = 0, \, \Delta \in \Sigma$ for all $\mu$ in the model.}
(see the Appendix for a proof of the main theorem without this assumption).
This allows us to express an ontological model in terms of probability densities, stochastic matrices, and response functions.

When a system is prepared via some procedure $P$, the physical properties
of the system are probabilistically assigned values, which are completely
encoded by the physical states $\lambda \in \Lambda$. Mathematically, we
associate each preparation procedure $P$ with a probability density over
$\Lambda$, $\mu_P:\Lambda\to[0,1]$, where $\int_\Lambda\mu_P(\lambda)d\lambda=1$
as a system is always in some physical state. That is, the probability that a
physical state, $\lambda\in\Lambda$, was prepared via $P$ is $\mu_P(\lambda)$.

Similarly, when a transformation is applied to a system, the physical properties
of the system dynamically evolve according to some stochastic map. Formally, we
associate each transformation procedure $T$ with a stochastic map
$\Gamma_T:\Lambda\times\Lambda\to [0,1]$, where the conditional probability that
some $\lambda$ is sent to another $\lambda^\prime$ by $T$ is
$\Gamma_T(\lambda^\prime,\lambda)$. As every physical state is mapped to some
physical state by a transformation, $\int_\Lambda
\Gamma_T(\lambda^\prime,\lambda) d\lambda^\prime = 1$ for all $\lambda \in
\Lambda$.

Finally, when a system is measured via some procedure $M$, the probability that
outcome $k$ occurs is specified by the physical state. That is, a measurement
$M$ is equivalent to a set of conditional probability functions
$\{\xi_k^M:\Lambda\to[0,1]\}_k$. As some measurement outcome always occurs,
$\sum_k \xi_k^M(\lambda)=1$ for all $\lambda\in\Lambda$. To correctly reproduce
the experimental statistics of the operational theory, the distributions must
satisfy
\begin{align}\label{TotProb} 
\text{Pr}(k|P, T, M) =  \int_\Lambda 
\xi_k^M(\lambda^\prime) \Gamma_{T}(\lambda^\prime, \lambda)\mu_{P}(\lambda)
d\lambda d\lambda^\prime. 
\end{align}

Ontological models are assumed, often implicitly ~\cite{Gurevich2015}, to be convex-linear, that is, a
probabilistic implementation of a set of operations is represented by the
probabilistic mixture of the corresponding probability densities.

\textit{Generalized Contextuality.---}We now review generalized contextuality.
The (experimental) setting of an operation is the set of other operations that
are performed with the operation during an experiment. Two operations are
operationally equivalent, denoted $\cong$, if they produce the same outcome
statistics in all settings.
\begin{itemize}
\item Two preparations $P$ and $P^\prime$ are equivalent, ($P\cong P^\prime$),
if $\text{Pr}(k|P, T, M) =\text{Pr}(k|P^\prime, T, M)\ \forall T,M$; 
\item Two transformations $T$ and $T^\prime$ are equivalent, ($T\cong
T^\prime$), if $\text{Pr}(k|P, T, M) =\text{Pr}(k|P, T^\prime, M)\ \forall P,M$;
and
\item Two measurement outcomes $k \in M$ and $k \in M^\prime$ are equivalent,
[$(k,M)\cong (k,M^\prime)$], if $\text{Pr}(k|P, T, M) =\text{Pr}(k|P, T,
M^\prime)\ \forall P,T$. 
\end{itemize}
Note that the definition of operational equivalence differs slightly from that
of Ref.~\cite{Spekkens2005} in that we consider operational equivalence of
individual measurement outcomes. However, this definition can be obtained from
that of Ref.~\cite{Spekkens2005} by coarse-graining all measurements into
two-outcome POVMs ~\cite{Leifer2005,Spekkens2014,Kunjwal2015}.

An ontological model is preparation non-contextual (PNC) if operationally
equivalent preparation procedures are represented by the same probability
densities, that is,
\begin{align}
\mu_P = \mu_{P^\prime} &\Leftrightarrow P\cong P^\prime \label{eq:PNC}.
\end{align}
Similarly, an ontological model is transformation non-contextual (TNC)  if 
\begin{align}
\Gamma_T = \Gamma_{T^\prime} &\Leftrightarrow T\cong T^\prime \label{eq:TNC} 
\end{align}
and measurement non-contextual (MNC) if
\begin{align}
\xi_{k,M} = \xi_{k,M^\prime} &\Leftrightarrow (k,M)\cong(k, M^\prime). \label{eq:MNC}
\end{align}
An ontological model is universally non-contextual, in the generalized sense, if
it satisfies \cref{eq:PNC,eq:TNC,eq:MNC}, otherwise it is
contextual~\cite{Spekkens2005}.

Even a single qubit manifests generalized contextuality ~\cite{Spekkens2005}. However, previous
proofs of generalized contextuality for a single qubit have required subtheories
strictly larger the single-qubit stabilizer subtheory.

\textit{Contextuality in the 8-state model.---}We now show that the 8-state
model of the single-qubit stabilizer subtheory exhibits transformation
contextuality, a feature missed in previous studies of this
model~\cite{Wallman2012c, Blasiak2013, Kocia2017}. The single-qubit stabilizer
subtheory consists of preparations and measurements in the eigenbases of the
single-qubit Pauli matrices $\{X,Y,Z\}$, the group of unitary transformations
that permute the signed single-qubit Pauli matrices (i.e. the single-qubit
Clifford group) and convex combinations of these operations. The single-qubit
stabilizer subtheory has the property that preparing an eigenstate of one Pauli
matrix $P$ with eigenvalue $\eta$ then measuring another Pauli $Q$ results in
the eigenvalue $\eta'=\pm \eta$ if $P=\pm Q$ and otherwise results in either
eigenvalue with equal probability.

The 8-state model, originally developed in Ref.~\cite{Wallman2012c}, is a
natural ontological model for the single-qubit stabilizer subtheory (see
\cref{fig:QubitCube}). It is defined by setting $\Lambda = \{\pm 1\}^{\times 3}$
and writing $\lambda = (x,y,z)$, where $x$, $y$, and $z$ are the eigenvalues of
$X$, $Y$, and $Z$ respectively. These ontic states form the extremal points of
the classical probability simplex for three random binary variables. Preparing
the $\eta$ eigenstate of $X$ corresponds to setting $x=\eta$ and choosing $y$
and $z$ uniformly at random, \textit{etc}. Similarly, measuring $X$ returns the
value of $x$, \textit{etc}. This model is both preparation and measurement
non-contextual~\cite{Wallman2012c}.

\begin{figure}[t]
 \centering
    \includegraphics[width=0.5\textwidth]{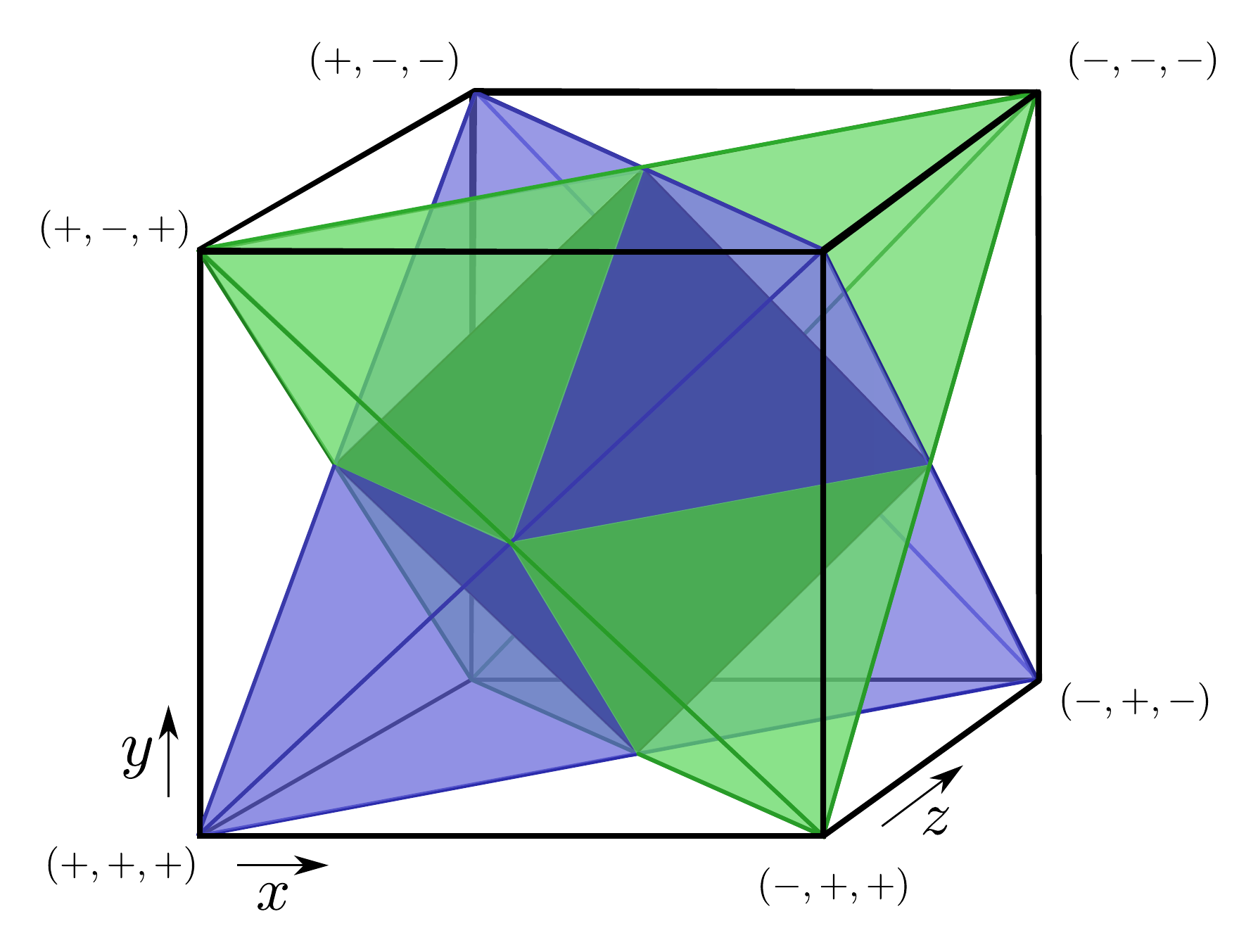}
\caption{A graphical representation of the ontic space $\Lambda$ of the 8-state
model, where the tuples $(x,y,z)$ index ontic states. The green and blue
tetrahedra are the simplices of the odd- and even-parity ontic states
respectively. The stabilizer polytope is the octahedron defined by the
intersection of the two tetrahedra. The transformation $\Gamma_{\cT_1}$ from
\cref{eq:T1} maps any ontic state in one of the tetrahedra to another ontic
state in the same tetrahedra, while the operationally equivalent transformation
$\Gamma_{\cT_2}$ from \cref{eq:T2} maps any ontic state to an ontic state in the
opposite tetrahedra.}
\label{fig:QubitCube}
\end{figure}

In the 8-state model, a transformation corresponds to a permutation that acts on
the hidden variable $(x,y,z)$ in the same way that it acts on the Pauli
operators $(X,Y,Z)$. For example, conjugation by $X$ maps $(X,Y,Z) \to (X, -Y,
-Z)$, and so is represented by the permutation $\Gamma_X: (x,y,z) \to (x, -y,
-z)$, with the transformations for $Y$ and $Z$ defined in a similar manner.
Conjugation by the Hadamard matrix,
\begin{align*}
H = \frac{1}{\sqrt{2}}\begin{pmatrix}
1 & 1 \\
1 & -1
\end{pmatrix},
\end{align*}
maps $(X,Y,Z)\to (Z,-Y,X)$ and so is represented by the permutation $\Gamma_H:(x,y,z)\to (z,-y,x)$. 
Note that a Pauli operation preserves the parity $xyz$ and the Hadamard swaps it (as does the phase gate $P$).

We now show the 8-state model is transformation contextual.
Let
\begin{align}
\cT_1(\rho) &= [\rho + X\rho X + Y\rho Y + Z\rho Z]/4 \notag\\,
\cT_2(\rho) &= H\cT_1(\rho) H.
\end{align}
These two transformations are operationally equivalent, as $\cT_1(\rho) = \cT_2(\rho) = I/2$ for any input state $\rho$.
However, by convexity we have
\begin{align}\label{eq:T1}
\Gamma_{\cT_1}[(a,b,c), (x,y,z)] = \begin{cases}
\frac{1}{4} & \mbox{if } xyz = abc \\
0 & \mbox{otherwise,}
\end{cases}
\end{align}
while, since the Hadamard swaps the sign of $xyz$,
\begin{align}\label{eq:T2}
\Gamma_{\cT_2}[(a,b,c), (x,y,z)] = \begin{cases}
0 & \mbox{if } xyz = abc \\
\frac{1}{4} & \mbox{otherwise.}
\end{cases}
\end{align}
That is, $\Gamma_{\cT_1} \neq \Gamma_{\cT_2}$, as illustrated in
\cref{fig:QubitCube}.

\paragraph*{The single-qubit stabilizer subtheoy is contextual.\textbf{---}}

Above we demonstrated that the 8-state model for the single-qubit stabilizer
subtheory is transformation contextual. We now prove that there is no 
generalized non-contextual model for the single-qubit stabilizer subtheory, and
hence that the single-qubit stabilizer subtheory is contextual. The proof
follows by reducing the ontic space of a general preparation non-contextual
model of the single-qubit stabilizer subtheory to that of the 8-state model.

\begin{theorem}\label{mainthm}
Every ontological model of the single-qubit stabilizer subtheory is either preparation or transformation contextual.
\end{theorem}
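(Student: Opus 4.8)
The plan is to argue by contradiction: suppose an ontological model of the single-qubit stabilizer subtheory were simultaneously preparation non-contextual (PNC) and transformation non-contextual (TNC). I would then show that PNC already forces the model to carry the coarse structure of the 8-state model, and invoke the transformation contextuality of the 8-state model exhibited above (via $\cT_1,\cT_2$) to reach a contradiction. Throughout I would work with densities, stochastic matrices and response functions relative to the dominating measure $\sigma$, and remove the dominating-measure assumption afterwards in the Appendix by an approximation argument.

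\emph{Step 1: reducing the ontic space.} The $Z$-eigenstates $|0\rangle$ and $|1\rangle$ are perfectly distinguished by a $Z$-measurement, so $\mu_{|0\rangle}$ is supported (mod $\sigma$-null sets) on $\{\xi^Z_{+1}=1\}$ and $\mu_{|1\rangle}$ on $\{\xi^Z_{+1}=0\}$; in particular their supports are disjoint, and similarly for $\{|+\rangle,|-\rangle\}$ via $X$ and for $\{|{+}i\rangle,|{-}i\rangle\}$ via $Y$. Since $\tfrac12(\mu_{|0\rangle}+\mu_{|1\rangle})$, $\tfrac12(\mu_{|+\rangle}+\mu_{|-\rangle})$ and $\tfrac12(\mu_{|{+}i\rangle}+\mu_{|{-}i\rangle})$ all represent $I/2$, PNC collapses them to a single density, call it $h$. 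Disjointness then implies that $h$-almost every ontic state lies in exactly one of $\{z=\pm1\}$ (on which $\mu_{|0\rangle}$, resp.\ $\mu_{|1\rangle}$, has density $2h$), exactly one of $\{x=\pm1\}$, and exactly one of $\{y=\pm1\}$, yielding a measurable map $\lambda\mapsto(x,y,z)\in\{\pm1\}^{\times 3}$ and a partition of the reachable ontic space into at most eight cells $\Lambda_{xyz}$, with $\mu_{|0\rangle}$ supported on the four cells with $z=+1$ and equal there to $2h\,d\sigma$, and analogously for the other five pure states. This is the promised reduction to the 8-state model's ontic space.

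\emph{Step 2: Cliffords act on the cells by the 8-state permutations, and the contradiction.} Conjugation by a Pauli $P$ permutes the six Pauli eigenstates, so ``prepare a stabilizer pure state, then apply $P$'' is operationally equivalent to a preparation of another stabilizer pure state; PNC then forces $\Gamma_P$ to carry each cell (a.e.) into the cell dictated by the corresponding 8-state permutation, e.g.\ $(x,y,z)\mapsto(x,-y,-z)$ for $P=X$, while the identity channel fixes each cell setwise. All of these permutations preserve the parity $xyz$. The same reasoning forces $\Gamma_H$ to realize $(x,y,z)\mapsto(z,-y,x)$, which flips the parity. Hence, by convex-linearity, $\Gamma_{\cT_1}=\tfrac14(\Gamma_{\mathrm{id}}+\Gamma_X+\Gamma_Y+\Gamma_Z)$ is a mixture of parity-preserving maps and so sends each cell into the union of the cells of the same parity, whereas $\Gamma_{\cT_2}=\Gamma_H\circ\Gamma_{\cT_1}$ (cf.\ \cref{eq:T1,eq:T2}) sends each cell into the union of the cells of the opposite parity. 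Choosing any $\lambda$ with $h(\lambda)>0$ in a cell of positive measure at which all of the finitely many a.e.\ statements above hold, $\Gamma_{\cT_1}(\cdot,\lambda)$ and $\Gamma_{\cT_2}(\cdot,\lambda)$ are probability distributions supported on the disjoint even- and odd-parity regions, hence unequal; since $\cT_1\cong\cT_2$, this contradicts TNC. Therefore no model is simultaneously preparation and transformation non-contextual.

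\emph{Expected main obstacle.} The conceptual content is light --- the argument needs neither outcome determinism nor measurement non-contextuality, only that PNC fixes the ``cube'' of ontic cells together with the single $\mathbb{Z}_2$ invariant (the parity $xyz$) on which the two stabilizer realizations $\cT_1,\cT_2$ of the completely depolarizing channel disagree --- so the real work is the measure-theoretic bookkeeping: making the eight cells and the ``$\Gamma$ maps cells to cells a.e.''\ claims precise while discarding $\sigma$-null and unreachable states (using that the six pure-state supports already exhaust the reachable ontic space), verifying that absolute continuity with respect to $\sigma$ lets these a.e.\ statements survive the composition $\Gamma_H\circ\Gamma_{\cT_1}$, and finally handling the non-dominated case in the Appendix.
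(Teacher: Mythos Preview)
Your proposal is correct and follows essentially the same route as the paper: assume PNC, use the three operationally equivalent decompositions of $I/2$ together with disjointness of orthogonal-state supports to partition the reachable ontic space into the eight cells $\Lambda_{x,y,z}$, then use \cref{eq:MapSupport} (your ``prepare-then-transform is a preparation'' step) to force the Paulis and Hadamard to act as the 8-state permutations on those cells, so that the parity-preserving $\Gamma_{\cT_1}$ and parity-flipping $\Gamma_{\cT_2}$ differ despite $\cT_1\cong\cT_2$. The only cosmetic difference is that you derive the disjoint supports explicitly from perfect distinguishability via the response functions, whereas the paper simply cites the corresponding equations of Ref.~\cite{Spekkens2005}; your more careful measure-theoretic bookkeeping (working modulo $\sigma$-null sets, picking a single $\lambda$ at which all the finitely many a.e.\ statements hold) is exactly what the paper defers to its appendix.
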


\begin{proof}
Fix an arbitrary preparation non-contextual ontological model of the single-qubit stabilizer subtheory.
Let $\Delta_\rho$ be the support of the quantum state $\rho$ in the ontological model, that is the set of physical states $\rho$ has some possibility of preparing,
\begin{align}
\Delta_\rho = \set{\lambda | \, \mu_\rho(\lambda)>0, \lambda\in\Lambda}.
\end{align}
Deleting any ontic state $\lambda\in\Lambda$ such that $\mu_{I/2}(\lambda)=0$,
we can partition $\Lambda$ into 8 disjoint spanning sets from the assumption of
PNC ~\cite[eqs. (11) and (83)--(87)]{Spekkens2005},
\begin{align}
\Lambda_{x,y,z} = \Delta_{(I+xX)/2} \cap \Delta_{(I+yY)/2} \cap \Delta_{(I+zZ)/2}.
\end{align}

As the model is preparation non-contextual, every quantum state has a unique
support. Hence this partitioning is unique. 

Noting that preparing $\sigma$ and then applying a transformation $T$,
that implements a CPTP map $\Phi$, is a valid preparation procedure for the state
$\Phi(\sigma)$. It must be the case that $T$ maps the support of $\rho$ to the support of $\Phi(\rho)$ in a preparation non-contextual ontological model;
\begin{align}\label{eq:MapSupport}
\Gamma_T: \Delta_\rho \to \Delta_{\Phi(\rho)}.
\end{align}
Therefore a Pauli $X$ unitary must be represented by the permutation 
$\tau_X:\Lambda_{x,y,z}\to\Lambda_{x,-y,-z}$ on the partition $\set{\Lambda_{x,y,z}}$. Similarly, Pauli $I$, $Y$, and $Z$
transformations must be represented by the respective permutations
$\tau_I:\Lambda_{x,y,z}\to\Lambda_{x,y,z}$,
$\tau_Y:\Lambda_{x,y,z}\to\Lambda_{-x,y,-z}$, and
$\tau_Z:\Lambda_{x,y,z}\to\Lambda_{-x,-y,z}$. Therefore by convex linearity
there exists an implementation of $\cT_1$ that has the same stochastic map as \cref{eq:T1}, when defined over the
coarse-grained sets $\Lambda_{x,y,z}$.

Similarly for the Hadamard gate we have the map $\tau_H: \Lambda_{x,y,z} \to
\Lambda_{x,-y,z}$. Therefore there exists an implementation of $\cT_2$
that has the same stochastic map as \cref{eq:T2}, when defined over the coarse-grained sets $\Lambda_{x,y,z}$. That is
$\cT_1\cong\cT_2$ and yet they cannot be represented by the same stochastic map
in any preparation non-contextual model.

\end{proof}

We now show that any model of the single-qubit stabilizer subtheory must be either Kochen-Specker
contextual ~\cite{Kochen1967} or transformation contextual.

\begin{theorem}\label{ksthm}
Every ontological model of the single-qubit stabilizer subtheory is either
Kochen-Specker contextual or transformation contextual.
\end{theorem}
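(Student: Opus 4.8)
The plan is to adapt the reduction argument used in the proof of \cref{mainthm}, but now starting from an ontological model that is \emph{Kochen-Specker non-contextual} (i.e.\ outcome-deterministic and measurement non-contextual) rather than merely preparation non-contextual, and to derive the same contradiction with the operational equivalence $\cT_1\cong\cT_2$. First I would recall that Kochen-Specker non-contextuality forces each measurement response function $\xi_k^M$ to be $\{0,1\}$-valued and to depend only on the associated projector, so that each ontic state $\lambda$ assigns a definite value $\pm1$ to each of the three Pauli observables $X$, $Y$, $Z$ simultaneously. This gives a map $\lambda \mapsto (x(\lambda),y(\lambda),z(\lambda))\in\{\pm1\}^{\times 3}$, which partitions (the relevant part of) $\Lambda$ into at most eight cells $\Lambda_{x,y,z}$, exactly as in the first theorem but now obtained from the measurement side rather than the preparation side.

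Next I would argue that this partition is nontrivial — all eight cells are nonempty — because the six stabilizer pure states must be reproduced: preparing the $+1$ eigenstate of $X$ and measuring $X$ must always return $+1$, so $\mu_{(I+X)/2}$ is supported on cells with $x=+1$, and similarly for the others; combined with the requirement that $Y$ and $Z$ outcomes on this preparation are uniformly random (a property of the stabilizer subtheory stated in the excerpt), each of the four cells with $x=+1$ must have positive weight under $\mu_{(I+X)/2}$. Running this over all six Pauli eigenstates shows every $\Lambda_{x,y,z}$ is nonempty. Then, exactly as in \cref{eq:MapSupport}, a transformation implementing a CPTP map $\Phi$ must carry the support of $\rho$ into the support of $\Phi(\rho)$; applied to the stabilizer states this forces the Pauli and Hadamard unitaries to act on the cells $\{\Lambda_{x,y,z}\}$ as the permutations $\tau_X,\tau_Y,\tau_Z,\tau_H$ identified in the previous proof, with $\tau_H$ in particular swapping the parity $xyz$. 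By convex linearity, the coarse-grained stochastic maps of $\cT_1$ and $\cT_2$ on this partition are forced to be \cref{eq:T1} and \cref{eq:T2} respectively, which differ; hence the model is transformation contextual.

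The main obstacle I anticipate is the step establishing that Kochen-Specker non-contextuality alone pins down the eight-cell structure with all cells nonempty, without invoking preparation non-contextuality: one must be careful that outcome-determinism plus measurement non-contextuality genuinely yields a \emph{global} $\pm1$ assignment to $X,Y,Z$ for each $\lambda$ (this is immediate since the three Pauli measurements are performed on the same ontic state, and there is no joint-measurability obstruction for a single qubit among $X$, $Y$, $Z$ individually), and then that the operational statistics of the six stabilizer preparations force each cell to be populated. A secondary subtlety is that the support-mapping property \cref{eq:MapSupport} was derived in \cref{mainthm} using PNC; here I would instead note that it follows from the weaker fact that $T$ applied after a preparation of $\rho$ is operationally a preparation of $\Phi(\rho)$, together with the fact that in a KS model the support of a preparation is constrained by which deterministic measurement outcomes it can yield — so the image of $\Delta_\rho$ under $\Gamma_T$ must be consistent with every Pauli-eigenvalue assignment that $\Phi(\rho)$ can produce, which for the stabilizer states is enough to fix the cell-permutations. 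Once these points are secured, the contradiction with $\cT_1\cong\cT_2$ is identical to the one in \cref{mainthm}.
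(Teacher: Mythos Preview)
Your strategy coincides with the paper's: partition $\Lambda$ via the deterministic Pauli outcomes that outcome determinism guarantees, note that measurement non-contextuality makes this partition unique, and then show that the coarse-grained actions of $\cT_1$ and $\cT_2$ on the cells disagree. The divergence is in how you establish that each Clifford $U$ permutes the cells. You route this through preparations --- the image of a preparation of $\rho$ under $\Gamma_U$ is a preparation of $U\rho U^\dagger$, whose support is constrained by the deterministic outcomes $U\rho U^\dagger$ admits --- and you correctly flag that this needs care since \cref{eq:MapSupport} as used in \cref{mainthm} relies on PNC. But your patch still only delivers constraints of the form ``for $\mu_\rho$-almost-every $\lambda$, $\Gamma_U(\cdot,\lambda)$ lands in such-and-such half-space''. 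Without PNC the supports of the six stabilizer preparations need not overlap, so you cannot combine the $X$-, $Y$- and $Z$-derived constraints at a common ontic point to pin down the full target cell of $\Gamma_U(\cdot,\lambda)$.

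The paper instead invokes the \emph{measurement-side} analogue of \cref{eq:MapSupport} directly: applying $U$ and then measuring a Pauli $P$ is itself a measurement procedure with effect $U^\dagger P U$, so by measurement non-contextuality its response function equals $\xi^{U^\dagger P U}$, and by outcome determinism this is $\{0,1\}$-valued. Hence $\int \xi^P_\eta(\lambda')\,\Gamma_U(\lambda',\lambda)\,d\lambda'\in\{0,1\}$ for \emph{every} $\lambda\in\Lambda$, which forces $\Gamma_U(\cdot,\lambda)$ into a single cell pointwise and makes the parity argument for $\cT_1$ versus $\cT_2$ immediate. Incidentally, your non-emptiness step is also not valid as written --- uniform $Y$ and $Z$ marginals under $\mu_{(I+X)/2}$ do not force all four $x{=}{+}1$ cells to carry weight (equal mass on $\widetilde\Lambda_{+,+,+}$ and $\widetilde\Lambda_{+,-,-}$ already reproduces them) --- but that step becomes unnecessary once the pointwise cell-permutation is in hand.
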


\begin{proof}
The proof proceeds in the same manner as \cref{mainthm}, where we
use the fact that Kochen-Specker non-contextuality is implied by the conjunction 
of outcome determinism and generalized measurement
non-contextuality~\cite{Spekkens2005}. By outcome determinism, we can partition
$\Lambda$ into 8 disjoint sets according to the measurement outcomes;
\begin{align*}
\widetilde{\Lambda}_{xyz} = \set{\lambda| \xi_x^X (\lambda) = 1, \, \xi_y^Y (\lambda) = 1, \,\xi_z^Z (\lambda) = 1 },
\end{align*}
By measurement non-contextuality, this partitioning is unique. Using the 
equivalent to \cref{eq:MapSupport} for measurements, the maps $\Gamma_{\cT_1}$ 
and $\Gamma_{\cT_2}$ must be represented as stated in \cref{eq:T1,eq:T2}.
\end{proof}

Note \cref{ksthm} can be used to prove \cref{mainthm}, as preparation non-contextuality implies traditional contextuality for sharp measurements ~\cite{Leifer2013}.

\textit{Discussion.---}In this Letter we have shown that the single-qubit
stabilizer subtheory, a very simple subtheory of the smallest quantum system,
exhibits generalized contextuality. This demonstrates that generalized
contextuality is so prevalent that even an essentially trivial quantum subtheory is classified as contextual, and therefore non-classical. 
This contextuality is only apparent if all single-qubit stabilizer operations are accounted for, that is all stabilizer states, all stabilizer measurements, and the full Clifford group. Therefore a universally non-contextual model can only be constructed for strict subtheories of the singe-qubit stabilizer subtheory. For example, the Hadamard and Phase gates are not elements of the toy theory ~\cite{Spekkens2007} or the standard Wigner function ~\cite{Gibbons2004}, conversely the Hadamard gate is an element of the rebit subtheory ~\cite{Delfosse2015}, but $Y$ eigenstates and $Y$ measurements are not.

Our result also demonstrates that the operational reduction of only considering preparations and measurements is less robust than
previously recognized. As this reduction can conceal key features of the model, 
such as the presence of some forms of contextuality. It is an interesting open
problem to understand how and when this kind of reduction can obscure such
important conceptual features of an operational theory. 
A possible route to investigate this is the Choi-Jamiołkowski isomorphism. The isomorphism, in quantum theory, relates transformations to states in a larger Hilbert space. Hence by using a similar isomorphism for ontological models we may be able to find a connection between the impossibility of a  universal non-contextual ontological model of the single qubit stabilizer subtheory and the impossibility of a preparation non-contextual ontological model of the two-qubit stabilizer subtheory ~\cite{Mermin1993,Leifer2013}.

\begin{acknowledgments}
We would like to thank Rob Spekkens, Stephen Bartlett, Angela Karanjai, and Hammam Qassim for useful discussions. This research was supported by the Government of Canada through CFREF Transformative Quantum Technologies program, NSERC Discovery program, and Industry Canada.
\end{acknowledgments}

\bibliography{UnderlyingOntology}

\appendix

\section{Appendix: Measure-Theoretic Treatment of Theorem 1}
\label{app:measure}

We now prove \cref{mainthm} in the more general measure-theoretic framework for
ontological models, see ~\cite{Leifer2014} definition 8.2.

\begin{theorem}
Every ontological model of the single-qubit stabilizer subtheory is either 
preparation or transformation contextual.
\end{theorem}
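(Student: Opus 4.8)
The plan is to run the same reduction as in the proof of \cref{mainthm}, replacing only its ``delete the $\mu_{I/2}$-null ontic states and partition $\Lambda$ into eight cells'' step --- which tacitly used probability densities --- by a Radon--Nikodym argument carried out against the single measure $\mu_{I/2}$. The enabling observation is that, although the model as a whole need not admit a dominating measure, $\mu_{I/2}$ dominates the measure of \emph{every} stabilizer preparation: any stabilizer preparation prepares a convex combination of the six Pauli eigenstates, so by convex linearity its measure is the matching convex combination of $\mu_{(I\pm X)/2}$, $\mu_{(I\pm Y)/2}$, $\mu_{(I\pm Z)/2}$, and each of these is dominated by $\mu_{I/2}$ because $\mu_{I/2} = \tfrac12\mu_{(I+X)/2} + \tfrac12\mu_{(I-X)/2} \ge \tfrac12\mu_{(I\pm X)/2}$, and similarly for $Y$ and $Z$. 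Thus, working modulo $\mu_{I/2}$-null sets, the part of the model the argument actually touches still admits a genuine density description.

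First I would fix a preparation non-contextual model and, for each $A\in\{X,Y,Z\}$ and $v\in\{\pm1\}$, set $p^A_v = d\mu_{(I+vA)/2}/d\mu_{I/2}$ and $\Delta^A_v = \{\lambda : p^A_v(\lambda)>0\}$, both defined up to $\mu_{I/2}$-null sets; $\Delta^A_v$ is the measure-theoretic surrogate of the support appearing in \cref{eq:MapSupport}. Because $\tfrac12 p^A_+ + \tfrac12 p^A_- = 1$ holds $\mu_{I/2}$-a.e., the set $\Delta^A_+\cup\Delta^A_-$ is co-null; and essential disjointness of $\Delta^A_+$ and $\Delta^A_-$ follows from the perfect distinguishability of the two $A$-eigenstates by the ideal $A$-measurement $M_A$, since $\int \xi^{M_A}_+ \, d\mu_{(I+A)/2} = \Tr{\tfrac{I+A}{2}\tfrac{I+A}{2}} = 1$ with $\xi^{M_A}_+\le 1$ forces $\xi^{M_A}_+ = 1$ $\mu_{(I+A)/2}$-a.e.\ and, likewise, $\xi^{M_A}_+ = 0$ $\mu_{(I-A)/2}$-a.e., which is impossible on a $\mu_{I/2}$-non-null subset of $\Delta^A_+\cap\Delta^A_-$. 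This produces the eight-cell partition $\Lambda_{x,y,z} = \Delta^X_x\cap\Delta^Y_y\cap\Delta^Z_z$ of $\Lambda$ modulo $\mu_{I/2}$-null sets, exactly as in \cref{mainthm}, with distinct cells overlapping only in $\mu_{I/2}$-null sets.

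Next I would transport this partition through the Clifford transformations. For a Clifford unitary $U$ the kernel $\Gamma_U$ obeys $\Gamma_U\cdot\mu_\rho = \mu_{U\rho U^\dagger}$ for every stabilizer $\rho$; in particular $\Gamma_U\cdot\mu_{I/2} = \mu_{I/2}$, so $\Gamma_U$ preserves $\mu_{I/2}$-absolute continuity, and from $\int_{\Delta^A_v}\Gamma_U(\lambda, \Lambda\setminus\Delta^{A'}_{v'})\,d\mu_{(I+vA)/2}(\lambda) = (\Gamma_U\cdot\mu_{(I+vA)/2})(\Lambda\setminus\Delta^{A'}_{v'}) = 0$ --- where $U$ sends the $v$-eigenstate of $A$ to the $v'$-eigenstate of $A'$ --- it follows that $\Gamma_U(\lambda,\cdot)$ is, for $\mu_{I/2}$-a.e.\ $\lambda$, concentrated on the correspondingly permuted cell. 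Specialising $U$ to $\{I,X,Y,Z\}$ and invoking convex linearity gives an implementation of $\cT_1$ whose kernel, for a.e.\ $\lambda\in\Lambda_{x,y,z}$, is concentrated on cells of the same parity $xyz$ --- the measure-theoretic form of \cref{eq:T1}; specialising $U$ to $\{H,HX,HY,HZ\}$ gives an implementation of $\cT_2$ whose kernel, for a.e.\ such $\lambda$, is concentrated on cells of the opposite parity $-xyz$ --- the form of \cref{eq:T2} --- since $H$ flips the parity $xyz$ while the Paulis preserve it. Finally, using $\Gamma_{\cT_1}\cdot\mu_{I/2} = \mu_{I/2}$ to discard the $\mu_{I/2}$-null overlaps of distinct cells, one sees that for a.e.\ $\lambda$ in any $\mu_{I/2}$-non-null cell $\Lambda_{x,y,z}$ --- one exists since the eight cells cover $\Lambda$ up to a null set --- $\Gamma_{\cT_1}(\lambda,\cdot)$ assigns zero mass to the union of the cells of parity $-xyz$ whereas $\Gamma_{\cT_2}(\lambda,\cdot)$ assigns it full mass, so the two kernels differ on a set of positive $\mu_{I/2}$-measure, contradicting \cref{eq:TNC} for the operationally equivalent pair $\cT_1\cong\cT_2$.

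The main obstacle will be the eight-cell partition of the second step, which is the only place where the absence of a global dominating measure bites; the point to make is that this difficulty is entirely isolated in the observation $\mu_\rho\ll\mu_{I/2}$ for stabilizer $\rho$, after which each ``support'' is just a $\{p>0\}$ set read modulo $\mu_{I/2}$-null sets, and the disjointness input is precisely the perfect-distinguishability fact already used in \cref{mainthm}. A secondary subtlety is that ``$\Gamma_U$ maps supports to supports'' is only an almost-everywhere statement and that an individual kernel $\Gamma_U(\lambda,\cdot)$ need not be $\mu_{I/2}$-absolutely continuous; both are handled by the $\Gamma\cdot\mu_{I/2} = \mu_{I/2}$ identities, and neither is a genuine obstacle because the final contradiction is extracted on a set of positive $\mu_{I/2}$-measure.
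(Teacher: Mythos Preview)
Your proposal is correct and follows essentially the same route as the paper's appendix proof: both pass to a description modulo $\mu_{I/2}$-null sets, build the eight-cell partition from the supports of the six Pauli eigenstates, and then rerun the main-text argument that the Paulis preserve the parity label while $H$ flips it. The only difference is that where the paper tersely redefines ``support'' via \cref{eq:MSupp} and says ``delete null sets, then proceed,'' you make the same step explicit through the Radon--Nikodym derivatives $p^A_v = d\mu_{(I+vA)/2}/d\mu_{I/2}$ (using the observation $\mu_\rho\ll\mu_{I/2}$ for stabilizer $\rho$, which is exactly what licenses the paper's deletion step) and carefully track the ``a.e.'' qualifiers through to the kernel comparison; this is a more detailed execution of the same idea rather than a different approach.
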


\begin{proof}
To adapt the proof to the measure-theoretic setting, we need to change the 
definition of support of a quantum state as follows. Let $\Delta_\rho$ be the 
support of the quantum state $\rho$ in the ontological model, that is, a (not
necessarily unique) set such that for all $S\in\Sigma$,
\begin{align}\label{eq:MSupp}
\mu_\rho(S) 
\begin{cases}
=1 & \mbox{if } \Delta_\rho\subseteq S\\
<1 & \mbox{otherwise.}
\end{cases}
\end{align}
As before we delete any measurable set $S\in\Sigma$ such that $\mu_{I/2}(S)=0$,
then partition $\Lambda$ into 8 spanning sets that intersect on sets of measure
zero, from the assumption of PNC ~\cite[eqs. (11) and (83)--(87)]{Spekkens2005}.
Having reduced the model to a model over a finite set of states, the rest of the 
proof follows as described in the main text.
\end{proof}

\end{document}